\newtheorem{lemma}{Lemma}
\newtheorem{theorem}{Theorem}
\newtheorem{remark}{Remark}
\newcommand{\be}{\begin{eqnarray}}
\newcommand{\ee}{\end{eqnarray}}
\newcommand{\bee}{\begin{eqnarray*}}
\newcommand{\eee}{\end{eqnarray*}}
\newcommand{\R}{{\mathbb R}}
\newcommand{\Z}{{\mathbb Z}}
\newcommand{\C}{{\mathbb C}}
\newcommand{\asy}{{\it O}}
\begin{document}

\title [Double-barrier resonances]{Quantum resonances and time decay for a double-barrier model}

\author {Andrea SACCHETTI}

\address {Department of Physics, Computer Sciences and Mathematics, University of Modena e Reggio Emilia, Modena, Italy}

\email {andrea.sacchetti@unimore.it}

\date {\today}

\thanks {I'm very grateful to Sandro Teta for useful discussions. \ This work is partially 
supported by Gruppo Nazione per la Fisica Matematica (GNFM-INdAM)}

\begin {abstract} {Here we consider the time evolution of a one-dimensional quantum system with a double 
barrier given by a couple of two repulsive Dirac's deltas. \ In such a \emph {pedagogical} model we give, by 
means of the theory of quantum resonances, the explicit 
expression of the dominant terms of $\langle \psi , e^{-it H} \phi \rangle$, where $H$ is the double-barrier 
Hamiltonian operator and where $\psi$ and $\phi$ are two test functions.}

\bigskip

{\it Ams classification (MSC 2010): 35Pxx, 81Uxx, 81U15, 65L15}  

\bigskip

{\it Keywords: Resonances for Schr\"odinger’s equation, asymptotic analysis of resonances, time 
decay of quantum systems} 

\end{abstract}

\maketitle

\section {Introduction}

The phenomenon of exponential decay associated with quantum resonances is well known since the pioneering works on the Stark effect in an isolated hydrogen atom 
(see the review papers \cite {Ha,Zw}). \ In order to explain such an effect let us consider, in a 
more general context, an Hamiltonian with a discrete eigenvalue ${\mathcal E}_0$ and an associated 
normalized eigenvector $\psi_0$. \ We suppose to weakly perturb such an Hamiltonian and that the new Hamiltonian $H$ has purely absolutely continuous spectrum, 
that is the eigenvalue of the former Hamiltonian disappears into the continuous spectrum. \ Then we physically expect that, after a very short time, one has 
\be
\langle \psi_0 , e^{-itH} \psi_0 \rangle \sim e^{-i t {\mathcal E}} \label {F1}
\ee
where ${\mathcal E}$ is a quantum resonance close to the unperturbed eigenvalu ${\mathcal E}_0$, 
i.e. $\Re {\mathcal E} \sim \Re {\mathcal E}_0$ and  $\Im {\mathcal E} <0$ is such that 
$|\Im {\mathcal E}| \ll 1$. 

The validity of (\ref {F1}) has been proved when the perturbation term is given by a Stark 
potential. \ In such a case Herbst \cite {He} proved that (\ref {F1}) holds true with 
an estimate of the error term (see also \cite {FK} for a extension of such a result to a wider class of models). \ However, we should remark that Simon \cite {S} pointed out that 
the exponentially decreasing behavior is admitted only if the perturbed Hamiltonian $H$ is not bounded from below. \ In fact, 
in the case of Hamiltonian $H$ bounded from below we expect to observe a time decay of the form
\be
\langle \psi_0 , e^{-i t H} \psi_0 \rangle = e^{-i t {\mathcal E}} + b (t) \label {F2}
\ee
where the \emph {remainder} term $b(t)$ is dominant for small and large times, and the 
exponential behavior is dominant for intermediate times. \ On the other hand, dispersive estimates for 
one-dimensional Schr\"odinger operators \cite {KS,W} suggest that for large times the remainder term $b(t)$ is bounded by $c t^{-1/2}$, for some $c>0$, as in the free model 
with no barriers. \ However, this estimate is very raw because it does not take into account the resonances effects.

In this paper we consider a simple one-dimensional model with a double barrier potential with Hamiltonian
\bee
H_\alpha = - \frac {\hbar^2}{2m}\frac {d^2}{dx^2} + \alpha \delta (x+a) + \alpha \delta (x-a) \, . 
\eee
The two barriers are modeled by means of two repulsive Dirac's deltas at $x =\pm a$, for some $a>0$, with 
strength $\alpha \in (0,+\infty ]$. \ This model has been considered by \cite {G} as a \emph {pedagogical} model for the explicit study of quantum barrier resonances. \ When 
$\alpha = + \infty$ the spectrum is purely discrete. \ When $\alpha <+\infty $ then the spectrum of $H_\alpha$ is purely absolutely continuous and the eigenvalues obtained 
for $H_\infty$ disappear into the continuum. \ More precisely, such eigenvalues becomes quantum resonances ${\mathcal E}_{\alpha , n}$ explicitly computed and the time decay of 
$\langle \psi , e^{-i H t}\phi \rangle $, where $\psi$ and $\phi$ are two test functions, has the form (\ref {F2}) where
\be
b(t) = c_\alpha t^{-3/2} + O(t^{-5/2}) \label {F3}
\ee
for large $t$ and for some $c_\alpha >0$ (see Theorem 1); in particular, in the case where the 
two test functions coincide with the unperturbed eigenvector then 
$c_\alpha $ may be explicitly computed (see Theorem 2) and it turns out that 
$c_\alpha \sim \alpha^{-2}$ in agreement with the fact that the asymptotic behavior (\ref {F3}) 
cannot uniformly hold true in a neighborhood of $\alpha =0$. \ We should remark that that our result improves the raw estimate obtained by the decay estimates; in fact, we prove 
that a cancellation effect occurs and that the $t^{-1/2}$ factor, as usually occurs for the free one-dimensional Laplacian problem, is canceled by means of an opposite term 
coming out from the two Dirac's deltas barrier. \ Hence, we can conclude that the effect of the double barrier is twice:

\begin {itemize}

\item [-] the time-decay becomes faster, for $t$ large for any $\alpha >0$;

\item [-] for intermediate times the time-decay is slowed down because of the effect of the quantum resonant states.

\end {itemize}

We finally remark that quantum resonances in one-dimensional double barrier models is a quite interesting 
problem, both for theoretical analysis (see, e.g., \cite {DSSW}) and for possible applications in 
quantum devices (see, e.g., \cite {ChVi}); and thus the explicit and complete analysis in 
a \emph {pedagogical} model would improve the general knowledge of the basic properties.
  
The paper is organized as follows: in Section 2 we introduce the model and we compute the quantum resonant energies; in Section 3 we state our main results and we compare the 
time decay for different values of the parameter $\alpha$; in Section 4 we give the proofs of Theorem 1 and 2; finally, since the calculation of the quantum resonances make 
use of the the Lambert special function we collect its basic properties in an appendix.

\section {Description of the model and quantum resonances}

\subsection {Description of the model}

We consider the resonances problem for a one-dimensional Schr\"odinger equation with two symmetric potential barrier. \ In particular we model the two barrier by means of 
two Dirac's $\delta$ at $x=\pm a$, for some $a>0$. \ The Schr\"odinger operator is formally defined on $L^2 (\R , d x)$ as 
\bee
H_\alpha = - \frac {\hbar^2}{2m}\frac {d^2}{dx^2} + \alpha \delta (x+a) + \alpha \delta (x-a) 
\eee
where $\alpha \in (0,+\infty ]$ denotes the strength of the Dirac's $\delta$. \ Hereafter, for the sake of simplicity let us set $x \to x \sqrt {2m/\hbar^2}$ and 
$a \to a \sqrt {2m/\hbar^2}$; hence the Schr\"odinger operator simply becomes
\bee
H_\alpha = - \frac {d^2}{dx^2} + \alpha \delta (x+a) + \alpha \delta (x-a) 
\eee
where now $a^{-2}$ has the physical dimension of the energy.

When $\alpha <+\infty$ it means that the wavefunction $\psi$ should satisfies to the matching conditions
\be
\psi (x+) = \psi (x-) \ \mbox { and } \ \psi' (x+) = \psi' (x-) + \alpha \psi (x) \ \mbox { at } 
x=\pm a\, , \label {Equa1}
\ee
and $H_\alpha$ has self-adjoint realization on the space of functions $H^2 \left ( \R \setminus \{ \pm a \} \right ) $ 
satisfying the matching conditions (\ref {Equa1}). \ When $\alpha =+\infty$ it means that $H_\infty$ has self-adjoint realization on the space of functions $H^2 \left ( \R \setminus \{ \pm a \} \right ) $ satisfying the Dirichlet conditions
\bee
\psi (\pm a )=0\, .
\eee
In this latter case then the eigenvalue problem
\bee
H_{\infty} \psi = {\mathcal E}_\infty \psi 
\eee
has simple eigenvalues ${\mathcal E}_{\infty , n} = k_n^2 $ where $k_n = \frac {n\pi }{2a}$, $n=1,2,\ldots $, with associated (normalized) eigenvectors
\be
\psi_n (x) = 
\left \{
\begin {array}{ll}
 0 & \mbox { if } \ x< -a \\ 
 \frac {1}{\sqrt {a}} \cos (k_n x ) & \mbox { if } -a < x <+a \\ 
 0 & \mbox { if } +a < x 
 \end {array}
 \right. \, , \ n=1,3,5,\ldots \, , \label {Equa2}
\ee
and
\be
\psi_n (x) = 
\left \{
\begin {array}{ll}
 0 & \mbox { if } \ x< -a \\ 
 \frac {1}{\sqrt {a}} \sin (k_n x ) & \mbox { if } -a < x <+a \\ 
 0 & \mbox { if } +a < x 
 \end {array}
 \right. \, , \ n=2,4,6,\ldots \, . \label {Equa3}
\ee

In the case $\alpha \in (0,+\infty)$ then the eigenvalue problem
\bee
H_\alpha \psi = {\mathcal E}_\alpha \psi 
\eee
has no real eigenvalues, but resonances; where resonances correspond to the complex values of ${\mathcal E}_\alpha$ such that the wavefunction
\bee
\psi (x) = 
\left \{
\begin {array}{ll}
 A e^{i k x} + B e^{-i k x} & \mbox { if } \ x< -a \\ 
 C e^{i k x} + D e^{-i k x} & \mbox { if } -a < x <+a \\ 
 E e^{i k x} + F e^{-i k x} & \mbox { if } +a < x 
 \end {array}
 \right. \, , \ k = \sqrt {{\mathcal E}_\alpha} \, , \ \Im k \ge 0 \, , 
\eee
satisfying the matching condition (\ref {Equa1}), satisfies the \emph {outgoing} condition
\be
A =0 \ \mbox { and } \ F =0\, . \label {Equa4}
\ee

\subsection {Calculation of resonances}

The matching condition (\ref {Equa1}) implies that
\bee
M_1 (-a) \left ( 
\begin {array}{c} 
A \\ B 
\end {array}
\right ) 
= 
M_2 (-a) \left ( 
\begin {array}{c} 
 C \\ D 
\end {array}
\right ) 
\eee
and
\bee 
M_1 (+a) \left ( 
\begin {array}{c} 
 C \\ D 
\end {array}
\right ) 
= 
M_2 (+a) \left ( 
\begin {array}{c} 
 E \\ F 
\end {array}
\right ) 
\eee
where
\bee
M_1 (x) = \left ( 
\begin {array}{cc} 
e^{i k x} & e^{-i k x} \\ 
i k e^{i k x} & - i k e^{-i k x} 
\end {array}
\right ) 
\eee 
and 
\bee
M_2 (x) = \left ( 
\begin {array}{cc} 
e^{i k x} & e^{-i k x} \\ 
(i k -\alpha ) e^{i k x} & - (i k +\alpha ) e^{-i k x} 
\end {array}
\right ) \, .  
\eee
Hence
\bee
\left ( 
\begin {array}{c} 
 E \\ F 
\end {array}
\right ) = M 
\left ( 
\begin {array}{c} 
 A \\ B 
\end {array}
\right ) 
\eee
where
\bee 
M = \left [ M_2(+a) \right ]^{-1} M_1 (+a)  \left [ M_2(-a) \right ]^{-1} M_1 (-a)
\eee
and the resonance condition (\ref {Equa4}) implies that $k$ must satisfies to the following equation
\be
M_{2,2} =0 \, . \label {Equa5}
\ee
A straightforward calculation gives that equation (\ref {Equa5}) takes the form
\be
\frac {1}{4k^2} \left [ e^{4ika } \alpha^2 + 4 k^2 + i 4 k \alpha - \alpha^2 \right ] = 0 \label {Equa6}
\ee
that is, in agreement with equations (284) by \cite {G},
\be
\left ( e^{2ika} \alpha \right ) \pm i \left ( 2 k + i \alpha \right ) =0 \label {Equa7}
\ee
which has two families of complex-valued solutions
\be
k_{1,m} = \frac {i}{2a} \left [ W_m (-a \alpha e^{a \alpha } ) - a\alpha \right ] \label {Equa8}
\ee
and
\be
k_{2,m} = \frac {i}{2a} \left [ W_m (a \alpha e^{a \alpha } ) - a\alpha \right ] \label {Equa9}
\ee
where $W_m(x)$ is the $m$-th branch, $m \in \Z$, of the Lambert special function (in Appendix A we 
recall some basic properties of the Lambert function). \ It turns out that $\Im k_{j,m} <0$ for any $j$ and $m$, but $k_{2,0}=0$, and 
thus equation $H_\alpha \psi = {\mathcal E}_\alpha \psi$ has no eigenvalues for any $\alpha >0$. \ However, we have to remark that for $m<0$ then $\Re k_{j,m} >0$ and 
$\Im k_{j,m} <0$ and then ${\mathcal E}_\alpha = \left ( k_{j,m} \right )^2$ belongs to the \emph {unphysical sheet} with $\Im {\mathcal E}_\alpha <0$ for $m=-1,-2,-3,\ldots $. \ Therefore, we conclude 
that:

\begin {lemma} The spectral problem $H_\alpha \psi ={\mathcal E}_\alpha \psi$ has a family of resonances given by
\be
{\mathcal E}_{\alpha, n} = 
\left \{
\begin {array}{ll} 
k_{1,-(n+1)/2}^2 = \left [ \frac {i}{2a} \left ( W_{-\frac {n+1}{2}}( - a \alpha e^{a \alpha } ) - 
a \alpha  \right ) \right ]^2 & \ \mbox { if } n =1,3,5,\ldots  \\ 
k_{2,-n/2}^2 = \left [ \frac {i}{2a} \left ( W_{-\frac {n}{2} } (  a \alpha e^{a \alpha } )  - 
a \alpha   \right ) \right ]^2 & \ \mbox { if } n =2,4,6,\ldots 
\end {array}
\right.  \, . \label {Equa10}
\ee
\end {lemma}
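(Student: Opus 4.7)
The statement is an explicit closed-form inversion, via the multivalued Lambert function, of the two transcendental equations (\ref{Equa7}) already derived above. The plan is therefore to solve each equation in (\ref{Equa7}) for $k$ in closed form, and then to select those roots for which $\mathcal{E}_\alpha = k^2$ is a bona fide resonance, i.e.\ lies on the unphysical sheet of the Riemann surface of $\sqrt{\cdot}$.

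The key step is the substitution $W := a\alpha - 2ika$, equivalently $k = \frac{i}{2a}(W - a\alpha)$. Applied to the ``$+$'' equation in (\ref{Equa7}), i.e.\ $e^{2ika}\alpha = \alpha - 2ik$, and after multiplying through by $a$ and rearranging, one obtains
$$W e^{W} = a\alpha\, e^{a\alpha},$$
whose solutions are exactly $W = W_m(a\alpha e^{a\alpha})$, $m\in\Z$, by the defining property of the Lambert function. Applied to the ``$-$'' equation $e^{2ika}\alpha = 2ik - \alpha$, the same substitution produces instead
$$W e^{W} = -a\alpha\, e^{a\alpha},$$
hence $W = W_m(-a\alpha e^{a\alpha})$. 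Unwinding the substitution reproduces the two families $k_{2,m}$ and $k_{1,m}$ displayed in (\ref{Equa8})--(\ref{Equa9}).

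It then remains to select the branches that correspond to resonances in the sense of the outgoing condition (\ref{Equa4}). The imaginary part of each $k_{j,m}$ is $\le 0$, and the unique root on the real axis is $k_{2,0}=0$, a consequence of the identity $W_0(xe^x)=x$ valid for $x\ge -1$, applied here with $x=a\alpha\ge 0$; this root yields no eigenvalue of $H_\alpha$. The remaining roots with $\Re k_{j,m}>0$ and $\Im k_{j,m}<0$ lie on the unphysical sheet and therefore define true resonances, and as the discussion preceding the lemma already notes, these are exactly the branches with $m=-1,-2,-3,\ldots$.

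The final relabeling is pure bookkeeping: for odd $n=1,3,5,\ldots$ one sets $m=-(n+1)/2$ in $k_{1,m}$, and for even $n=2,4,6,\ldots$ one sets $m=-n/2$ in $k_{2,m}$, so that the two interlaced families (mirroring, in the limit $\alpha\to+\infty$, the alternation between the cosine and sine eigenfunctions of (\ref{Equa2})--(\ref{Equa3})) combine into the single list (\ref{Equa10}). The main, and essentially only, obstacle is the careful tracking of the Lambert branches and of the sheet on which each resonance lives; all the hard analytic work was already carried out in reducing the outgoing boundary-value problem to (\ref{Equa7}).
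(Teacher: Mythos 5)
Your proposal is correct and follows essentially the same route as the paper: the paper's own justification of the lemma is precisely the reduction of the outgoing condition to (\ref{Equa7}) followed by its inversion via the Lambert function and the selection of the branches $m=-1,-2,\ldots$ with $\Re k_{j,m}>0$, $\Im k_{j,m}<0$. Your only addition is to make explicit the substitution $W=a\alpha-2ika$ turning (\ref{Equa7}) into $We^{W}=\pm a\alpha e^{a\alpha}$, a step the paper leaves implicit.
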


In Table \ref {tabella1} we report some values of the resonances and we compare these values for different choices of $\alpha$ with the real-valued eigenvalues 
of the problem $H_\infty \psi = {\mathcal E}_\infty \psi $.

\begin{table}
\begin{center}
\begin{tabular}{|l||c|c|c|c|c|} \hline
$n $   & $\alpha =1 $ & $ \alpha =10 $ &   $\alpha =100$   & $\alpha =1000$  &  $\alpha =+\infty$  \\ \hline 
$1$ & $0.57-i2.31$ &  $6.99 -i0.56$    & $9.49 -i1.14 \cdot 10^{-2}$ & $9.83-i1.23 \cdot 10{-4}$  &  $9.87$ \\ \hline
$2$   & $ 13.78 -i 19.35$ &  $29.26 -i3.86$  & $37.95 -i9.1 \cdot 10^{-2}$ & $39.32 - i 9.84 \cdot 10^{-4}$  &  $39.48$ \\ \hline
$3$  & $49.60 -i41.74$ &  $69.10-i10.55$   & $85.41 -i3.05 \cdot 10^{-1}$ & $88.47 -i 3.32 \cdot 10^{-3}$  &  $88.83$ \\ \hline
$4$  & $106.16 -i66.64$ & $127.91 - i 19.99$   & $151.89 -i7.13 \cdot 10^{-1}$ & $157.28 - i 7.87 \cdot 10^{-3}$  &  $157.91$ \\ \hline
\end{tabular}
\caption{\small Table of values of the resonances ${\mathcal E}_{\alpha ,n}$, $n=1,2,3,4$, given by (\ref {Equa10}) for $a= \frac 12$ and for $\alpha =1,\ 10,\ 100$ and 
$\alpha =1000$; for $\alpha =+\infty$ these values correspond to the real-valued eigenvalues of the problem $H_\infty \psi = {\mathcal E}_\infty \psi $.}
\label{tabella1}
\end{center}
\end {table}

\begin {remark}
Let $a>0$ be fixed then, from (\ref {Equa24}), it follows that for $n$ fixed and $\alpha$ large enough the asymptotic behavior of the resonances is given by
\bee
{\mathcal E}_{\alpha,n} &=& \left [ \frac {n\pi}{2a} \left ( 1 - \frac {1}{a\alpha} + \frac {1}{(a\alpha)^2} \right ) - i \frac {1}{2a} 
\left ( \left ( \frac {\ln (a\alpha )}{a \alpha } \right )^2 + \frac {n^2 \pi^2 }{2(a \alpha )^2} \right ) + \asy \left ( \alpha^{-3} \right ) \right ]^2 \\ 
&\sim & \left ( \frac {n \pi }{2 a } \right )^2 - i \frac {n \pi}{2 a^2} \left ( \frac {\ln (a \alpha )}{a \alpha } \right )^2 
\eee
\end {remark}

\subsection {Resolvent operator}

The explicit form of the resolvent of $H_\alpha$, $\alpha \in (0,+\infty)$ is given by \cite {Al} 
\bee
\left ( \left [ H_{\alpha } - k^2 \right ]^{-1} \phi \right )(x) =
\int_{\R} K_\alpha (x,y;k) \phi (y) d y,
\quad \phi\in L^2(\R) , \ \Im k \ge 0 \, ,
\eee
where the integral kernel $K_\alpha$ is given by
\be \label{Equa11}
K_\alpha (x,y;k) = K_0 (x,y;k) + \sum_{j=1}^4 K_j (x,y;k)
\ee
with
\bee
K_0 (x,y;k) = \frac {i}{2k}\,  e^{ik|x-y|}
\eee
and
\bee 
K_j (x,y;k) = g(k) L_j (x,y;k) \, , 
\eee
where
\be
g(k):= - \left [ 2k((2k+i \alpha )^2 + \alpha^2 e^{i4ka}) \right ]^{-1} \, ,  \label {Equa12} 
\ee
and
\bee
L_1 (x,y;k) = - \alpha(2k+i\alpha)\,  e^{ik |x+a|} e^{ik|y+a|}, &
L_4 (x,y;k)  = L_1 (-x,-y;k) \nonumber \\
L_2 (x,y;k) = i\alpha^2\, e^{2ika}\,  \, e^{ik |x+a|} e^{ik|y-a|}, &
L_3 (x,y;k)  = L_\alpha^2 (-x,-y;k) . 
\eee
Resonances can be defined as the complex poles in the \emph {unphysical sheet} $\Im {\mathcal E}_\alpha <0$ of the kernel of 
the resolvent, too; that is the pole of the function $g(k)$ in agreement with (\ref {Equa7}).

\section {Time decay - main results}

Let $\phi$ and $\psi$ two well localized wave-functions, we are going to estimate the time decay of the term
\be
\langle \psi , e^{-it H_\alpha} \phi \rangle \label {Equa12Bis}
\ee

\begin {theorem}
Let us assume that $\phi$ and $\psi$ have compact support. \ Then 
we have that
\be
\langle \psi , e^{-it H_\alpha} \phi \rangle = c_\alpha t^{-3/2} + \sum_{n=1}^\infty \beta_n c_n e^{-i {\mathcal E}_{\alpha ,n} t} + O (t^{-5/2}) \label {Equa13}
\ee
for some constants $c_\alpha$ and $c_n$ and where 
\be
\beta_n = 
\left \{ 
\begin {array}{ll}
1 & \ \mbox { if } \ \left | \Im \sqrt {{\mathcal E}_{\alpha ,n}} \right | <  \left | \Re \sqrt {{\mathcal E}_{n,\alpha }} \right | \\ 
\ \\ 
\frac 12 & \ \mbox { if } \ \left | \Im \sqrt {{\mathcal E}_{\alpha ,n}} \right | =  \left | \Re \sqrt {{\mathcal E}_{\alpha ,n}} \right | \\ 
\ \\ 
0 & \ \mbox { if } \ \left | \Im \sqrt {{\mathcal E}_{\alpha ,n}} \right | >  \left | \Re \sqrt {{\mathcal E}_{\alpha ,n}} \right | 
\end {array}
\right. \, . \label {Equa13Bis} 
\ee
\end {theorem}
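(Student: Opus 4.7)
The plan is to represent $\langle \psi, e^{-itH_\alpha}\phi\rangle$ as an oscillatory integral in the momentum variable via Stone's formula, deform the contour into the lower half $k$--plane to extract the resonance contributions, and apply Watson's lemma on the rotated contour. Starting from the spectral representation $\langle\psi,e^{-itH_\alpha}\phi\rangle=\tfrac{1}{2\pi i}\int_0^{+\infty}e^{-itE}\bigl\langle\psi,[R_\alpha(E+i0)-R_\alpha(E-i0)]\phi\bigr\rangle dE$, combined with the explicit kernel (\ref{Equa11}) and the change of variables $E=k^2$, $dE=2k\,dk$, the matrix element reduces to
\bee
\langle\psi,e^{-itH_\alpha}\phi\rangle=\frac{1}{2\pi i}\int_{-\infty}^{+\infty}e^{-itk^2}\,F(k)\,dk,\qquad F(k):=2k\int_{\R}\int_{\R}\overline{\psi(x)}\,K_\alpha(x,y;k)\,\phi(y)\,dx\,dy.
\eee
The compactness of the supports of $\psi$ and $\phi$ tames the factors $e^{ik|x\pm a|}$, so $F$ extends to a meromorphic function of $k\in\C$ whose only singularities are the zeros of $(2k+i\alpha)^2+\alpha^2 e^{4ika}=0$, i.e.\ the resonance momenta listed in Lemma 1.

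Next I would deform the real axis onto the line $\Gamma:=e^{-i\pi/4}\R$ through the origin, on which $k^2=-ir^2$ and $|e^{-itk^2}|=e^{-tr^2}$, the geometry best suited to Watson's lemma. The deformation is justified by a Jordan-type estimate based on the polynomial boundedness of the non-resonant part of $F$ in the sector $-\pi/2<\arg k<0$ together with the Gaussian decay of $e^{-itk^2}$ there. During the sweep one crosses the poles $k_{1,-(n+1)/2}$ (odd $n$) and $k_{2,-n/2}$ (even $n$) of $F$: a pole with $-\pi/4<\arg k<0$, equivalently $|\Im\sqrt{\mathcal{E}_{\alpha,n}}|<|\Re\sqrt{\mathcal{E}_{\alpha,n}}|$, is fully enclosed and contributes its entire residue $c_n e^{-it\mathcal{E}_{\alpha,n}}$ (so $\beta_n=1$); a pole lying exactly on $\Gamma$, i.e.\ $|\Im\sqrt{\mathcal{E}_{\alpha,n}}|=|\Re\sqrt{\mathcal{E}_{\alpha,n}}|$, is crossed only by an indented half-contour, yielding $\beta_n=\tfrac{1}{2}$; a pole with $\arg k<-\pi/4$ is not swept and gives $\beta_n=0$. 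Each active residue explicitly involves the residue of $g(k)$ at the corresponding $k_{j,m}$ integrated against the appropriate $L_j(x,y;k)\psi(x)\phi(y)$, and together these produce the resonance sum in (\ref{Equa13}).

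On $\Gamma$ the remaining integral takes the form $\int_0^{+\infty}e^{-tr^2}G(r)\,dr$ with $G$ obtained from $F$ by $k=re^{\pm i\pi/4}$, and Watson's lemma expands it asymptotically in terms of the Taylor coefficients of $F$ at $k=0$. The free piece $2kK_0(x,y;k)=ie^{ik|x-y|}$ is regular at the origin, while each $2kK_j(x,y;k)=2kg(k)L_j(x,y;k)$ is a priori singular because $g(k)\sim i[8\alpha(1+a\alpha)]^{-1}k^{-2}$. A direct computation, however, shows that $\sum_{j=1}^4 L_j(x,y;0)=0$, so the $k^{-1}$ singularities cancel in the sum; pushing the expansion one order further shows that the constant term of $F$ at $k=0$ also vanishes, which is precisely the cancellation announced in the introduction that suppresses the generic $t^{-1/2}$ dispersive decay. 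The linear Taylor term contributes zero by parity of $ke^{-itk^2}$ on the symmetric contour, so the quadratic coefficient of $F$ drives the asymptotics and produces $c_\alpha t^{-3/2}$, with all higher contributions controlled by $O(t^{-5/2})$ via Watson's lemma. The main obstacle is the explicit bookkeeping of this cancellation at $k=0$: one must expand $g(k)$ and the four $L_j(x,y;k)$ to order $k^2$, sum them against $\psi(x)\phi(y)$, and verify that the $k^{-1}$ and $k^0$ coefficients of $F$ vanish identically while producing an explicit $k^2$ coefficient -- this is both where the double-barrier geometry enters decisively and where the constant $c_\alpha$ emerges.
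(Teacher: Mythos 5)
Your proposal is correct and follows essentially the same route as the paper: the same momentum-space integral representation $\frac{1}{\pi i}\int_\R kK_\alpha e^{-ik^2t}dk$, the same rotation of the contour onto $e^{-i\pi/4}\R$ with the residues of the poles in the sector $-\pi/4<\arg k<0$ producing the resonance sum and the $\beta_n$ weights, and the same application of Watson's lemma with the key observation that $F_\alpha(0)=0$ kills the $t^{-1/2}$ term. Your explicit bookkeeping of the $k\to 0$ cancellation ($g(k)\sim k^{-2}$, $\sum_j L_j(x,y;0)=0$, and the cancellation of the resulting constant against the free-kernel contribution) is in fact more detailed than the paper's one-line assertion that $F_\alpha(0)=0$, and it checks out.
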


\begin {remark}
We may remark that in the case $\alpha =0$, that is when there are no barriers, then $\langle \psi , e^{-it H_0} \phi \rangle 
\sim t^{-1/2}$ and an apparent contradiction appears. \ The point is that the asymptotic expansion (\ref {Equa13}) is not 
uniform as $\alpha$ goes to zero. \ In fact, in an explicit model, see Theorem 2, it results that $c_\alpha \to \infty$ as $\alpha \to 0$.
\end {remark}

We consider now, in particular, the asymptotic behavior of (\ref {Equa12Bis}) when the test vectors $\phi$ and $\psi$ coincide with one of 
the \emph {resonant} states, e.g. with $\psi_1$.

\begin {theorem}
Let $\psi =\phi$ coinciding with the eigenvector $\psi_1$, let $\ell (k)$ be the function defined as 
\be
\ell (k) = 2\pi \sqrt {a}  \frac {e^{2kai}+1}{\pi^2-4k^2a^2}\, , \label {Equa22}
\ee
and let $k_{1,-m}$ be the resonances defined by (\ref {Equa8}). \ Then
\bee
\langle \psi_1 , e^{-it H_\alpha} \psi_1 \rangle = c_\alpha t^{-3/2} + \sum_{m=1}^\infty \beta_m c_m e^{-i k_{1,-m}^2t} + O(t^{-5/2})
\eee
where 
\bee
c_\alpha = - \frac {2^{3/2}(1+i)a}{\pi^{5/2} \alpha^2}\, , \ \ c_m = - 
\frac {\alpha \ell (k_{1,-m})^2  }{1+\alpha a \left ( 1 + \frac {2k_{1,-m}}{i \alpha } 
\right )} 
\eee
and where $\beta_m$ is defined by (\ref {Equa13Bis}).
\end {theorem}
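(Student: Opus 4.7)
The plan is to combine the Stone formula with the explicit resolvent kernel (\ref{Equa11}) and then perform a contour deformation in the spectral variable $k=\sqrt{\mathcal E}$. Setting $\tilde F_\alpha(k):=\langle\psi_1,K_\alpha(\cdot,\cdot;k)\psi_1\rangle$, extended meromorphically from $\Im k\ge 0$, and using the reality symmetry $\overline{\tilde F_\alpha(k)}=\tilde F_\alpha(-k)$ for real $k$, I would write
\bee
\langle\psi_1,e^{-itH_\alpha}\psi_1\rangle &=& \frac{1}{i\pi}\int_{-\infty}^{\infty} e^{-itk^2}\,\tilde F_\alpha(k)\,k\,dk.
\eee
Since $\psi_1$ is supported on $[-a,a]$ and $k_1 a=\pi/2$, a direct computation based on $e^{2ika}+1=2e^{ika}\cos(ka)$ gives $\int_{-a}^{a}\psi_1(x)\,e^{ik|x\pm a|}\,dx=\ell(k)$. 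Each $\langle\psi_1,L_j\psi_1\rangle$ is therefore a simple function of $k$ times $\ell(k)^2$, and the factorisation $(2k+i\alpha)^2+\alpha^2 e^{4ika}=(2k+i\alpha-i\alpha e^{2ika})(2k+i\alpha+i\alpha e^{2ika})$, after cancellation of one factor, leads to the compact form
\bee
\sum_{j=1}^{4}F_j(k) &=& \frac{\alpha\,\ell(k)^2}{k\,(2k+i\alpha+i\alpha e^{2ika})},
\eee
whose denominator vanishes precisely at the resonances $k_{1,m}$ of (\ref{Equa8}).

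The crucial observation is that the apparent pole at $k=0$ cancels: a direct computation gives $F_0(k)\sim i\ell(0)^2/(2k)=8ia/(\pi^2 k)$ and $\sum_j F_j(k)\sim -\ell(0)^2/(2ik)=-8ia/(\pi^2 k)$, so $k\tilde F_\alpha(k)$ is analytic at the origin and vanishes there. This cancellation is precisely the mechanism that converts the generic free decay $t^{-1/2}$ into $t^{-3/2}$. I would then deform the real contour to the rotated line $\{k=s\,e^{-i\pi/4}:s\in\R\}$, on which $e^{-itk^2}=e^{-ts^2}$. Since all poles of $\tilde F_\alpha$ lie in the lower half-plane, the deformation contributes $-2\pi i$ times the residues of those resonances $k_{1,-m}$ lying in the sector $-\pi/4<\arg k<0$ (boundary poles contribute half, poles outside the sector nothing---producing the weights $\beta_m$). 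Using $D'(k_{1,-m})=2-2a\alpha e^{2ik_{1,-m}a}$ together with the resonance identity $i\alpha e^{2ik_{1,-m}a}=-(2k_{1,-m}+i\alpha)$, one finds
\bee
k_{1,-m}\,\mathrm{Res}_{k=k_{1,-m}}\sum_{j=1}^{4}F_j(k) &=& \frac{\alpha\,\ell(k_{1,-m})^2}{2\,[1+a\alpha(1+2k_{1,-m}/(i\alpha))]},
\eee
and multiplying by the overall prefactor $(1/(i\pi))\cdot(-2\pi i)=-2$ reproduces the stated $c_m$.

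The remaining integral along the rotated contour is handled by Watson's lemma. Writing $k\tilde F_\alpha(k)=c_1 k+c_2 k^2+O(k^3)$, odd powers vanish by symmetry in $s$, the $c_2$ term produces the leading $t^{-3/2}$ behaviour, and higher even terms are $O(t^{-5/2})$. The coefficient $c_2$ combines $-iM_2/4$ coming from $F_0$ (with $M_2=\iint\psi_1(x)\psi_1(y)(x-y)^2\,dx\,dy=32a^3(\pi^2-8)/\pi^4$) with the second-order Taylor expansion of $\alpha\ell(k)^2/(2k+i\alpha+i\alpha e^{2ika})$ based on $\ell(0)=4\sqrt a/\pi$, $\ell'(0)=4ia\sqrt a/\pi$, and $D(k)=2i\alpha+2(1-a\alpha)k-2ia^2\alpha k^2+O(k^3)$; a three-way cancellation of all contributions of order $a^3$ and of order $a^2/\alpha$ leaves exactly $c_2=8ia/(\pi^2\alpha^2)$. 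Substituting $k=s\,e^{-i\pi/4}$, using $\int_{-\infty}^{\infty}s^2 e^{-ts^2}\,ds=\sqrt\pi/(2t^{3/2})$ together with $i\,e^{-i\pi/4}=(1+i)/\sqrt 2$, the resulting Gaussian integral produces $c_\alpha=-2^{3/2}(1+i)a/(\pi^{5/2}\alpha^2)$.

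The main obstacle will be precisely the delicate algebraic cancellation producing $c_2\propto\alpha^{-2}$: contributions of order $a^3$ (from the free second moment $M_2$) and of order $a^2/\alpha$ (from cross-terms in the Taylor expansion) must cancel pairwise, leaving only the $\alpha^{-2}$ piece. This is the same mechanism that simultaneously erases dependence on $M_2$ and renders $c_\alpha$ singular as $\alpha\to 0$, in agreement with Remark~2. Rigorous justification of the contour deformation also requires uniform bounds on $\tilde F_\alpha(k)$ as $|k|\to\infty$ in the lower half-plane, but these follow from standard Paley--Wiener-type estimates given the compact support of $\psi_1$.
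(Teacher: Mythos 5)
Your proposal is correct and follows essentially the same route as the paper: contour rotation to $\arg k=-\pi/4$, residues at the poles $k_{1,-m}$ with the weights $\beta_m$ determined by whether the pole lies inside, on, or outside the rotated sector, and Watson's lemma for the power-law tail (the paper merely organizes the computation as $f_0+f_r$ and observes the cancellation of the $t^{-1/2}$ and the $a^3(\pi^2-8)t^{-3/2}$ contributions after expanding each piece, whereas you cancel at the level of the Taylor coefficients of the combined integrand, exactly as the paper itself does in its proof of Theorem~1). All your numerical values ($\ell(0)$, $\ell'(0)$, $M_2$, $D'(k_{1,-m})$, $c_2=8ia/(\pi^2\alpha^2)$, and the resulting $c_\alpha$ and $c_m$) check out; the only blemish is the intermediate expression $-\ell(0)^2/(2ik)$ for the small-$k$ behaviour of $\sum_j F_j$, which should read $+\ell(0)^2/(2ik)$ to equal the (correct) value $-8ia/(\pi^2k)$ you state next to it.
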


\begin {remark}
Recall that
\bee
\ell (k_{1,-m}) = \frac {4i\pi \sqrt {a} k_{1,-m}}{\alpha (\pi^2 - 4 k_{1,-m}^2 a^2)}
\eee
Hence
\bee
\ell (k_{1,-m}) \sim O(\alpha^{-1}) \ \mbox { if } \ m\not= 1
\eee
as $\alpha \to +\infty$. \ For $m=1$, from the asymptotic behavior of $k_{1,-m}$ it follows that 
\bee
\pi^2 - 4 k_{1,-1}^2 a^2 \sim  \pi^2 - 4 a^2 \left [ \frac {\pi^2}{4a^2} \left ( 1 - \frac {2}{a\alpha} \right ) \right ] = 
\frac {2\pi^2}{a\alpha}
\eee
and then
\bee
\ell (k_{1,-1}) \sim   i \sqrt {a} \, ,\ \mbox { as } \ \alpha \to + \infty \, . 
\eee
Hence, as $\alpha$ goes to infinity it follows that the dominant term of $\langle \psi_1 , e^{-it H_\alpha} \psi_1 \rangle$ is given by
\bee
\langle \psi_1 , e^{-it H_\alpha} \psi_1 \rangle =  e^{-i \frac {\pi}{2a} t} + O(\alpha^{-1}) 
\eee
in agreement with the fact that $\langle \psi_1 ,e^{-iH_\infty t} \psi_1 \rangle = e^{-i {\mathcal E}_{\alpha ,1} t}$.
\end {remark}

\begin {remark} Let us compare, in the limit of large $\alpha$, the two dominant terms of $\langle \psi_1 , e^{-it H_\alpha} \psi_1 \rangle$; that is the power term 
\bee
\frac {d_1}{\alpha^2 t^{3/2}} \, ,\ \mbox { where } d_1 = \left | \frac {2^{3/2}(1+i)a}{\pi^{5/2} } \right | = \frac {4a}{\pi^{5/2}}\, , 
\eee
and the exponential term
\bee
d_3 e^{\Im {\mathcal E}_{\alpha ,1} t} = d_3 e^{-d_2 t \left ( \frac {\ln (a\alpha )}{a \alpha } \right )^2 } \, ,  \mbox { where } \ d_2 = 
\frac {\pi}{2a^2} \ \mbox { and } \ d_3= |c_1| = 
\left | \frac {\alpha \ell (k_{1,-m})^2 }{1+\alpha a \left ( 1 + \frac {2k_{1,-m}}{i \alpha } 
\right )} \right | \sim 1 \, . 
\eee
In order to understand when the power behavior dominates and when the exponential behavior dominates we have to solve the inequality
\bee
\frac {d_1}{\alpha^2 t^{3/2}} < d_3 e^{-d_2 t \left ( \frac {\ln (a\alpha )}{a \alpha } \right )^2 } \, . 
\eee
A straightforward calculation gives that this inequality is satisfied for any $t\in [t_1 , t_2]$, where $0<t_1 <t_2$ are given by 
\bee
t_{1} = -\frac {3}{\pi} W_{0} (z) \frac {a^4 \alpha^2}{ \ln^2 (a\alpha)} \ \mbox { and } \ t_2 = -\frac {3}{\pi} W_{0} (-z) 
\frac {a^4 \alpha^2}{ \ln^2 (a\alpha)} 
\eee
where
\bee
z=  -\frac {2^{4/3}}{3 \pi^{2/3}} \frac {\ln^2 (a\alpha)} { (a\alpha)^{10/3}  } \, .
\eee
This interval is not empty provided that the argument $z$ of the Lambert function is between $(-1/e ,0)$; that is 
\bee
\frac {2^{4/3}}{3 \pi^{2/3}} \frac {\ln^2 (a\alpha)} { (a\alpha)^{10/3}  }  < \frac 1e
\eee
which holds true for 
\bee
a \alpha > \mbox {\rm exp} \left [ - \frac 35 W_0 \left ( \frac 56 (2\pi )^{1/3} \sqrt {2/e}  \right ) \right ] \approx 0.635 \, . 
\eee
\end {remark}

\begin {remark}
In Fig. \ref {decadimento} we plot the graph of $\langle \psi_1 , e^{-it H_\alpha} \psi_1 \rangle$ for different values of $\alpha$: 
$\alpha =0,\ 0.1, \ 1$ and $10$. \ We numerically 
compute the integral $\langle \psi_1 , e^{-itH_\alpha }  \psi_1 \rangle$ and we observe, in fully agreement with 
Theorem 2, that when $\alpha >0$ then the time decay, for large $t$, is faster of the one obtained for $\alpha =0$; on the other side we 
observe that for intermediate times the time decay is delayed as effect of the resonant states.
\end {remark}

\begin{figure} [h]
\begin{center}
\includegraphics[height=8cm,width=8cm]{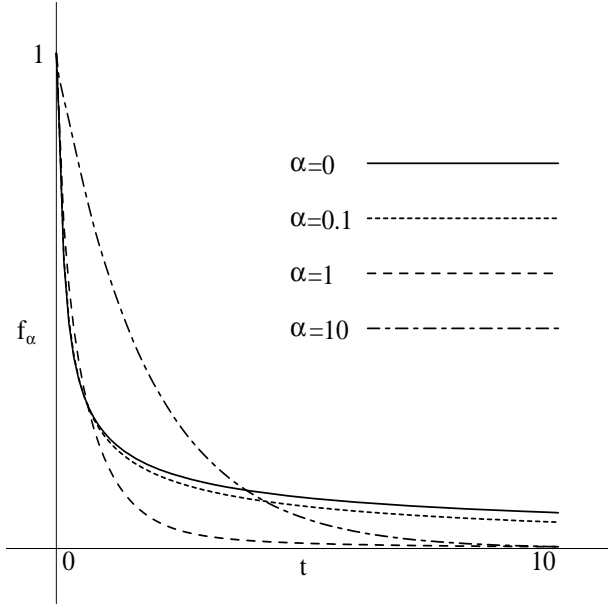}
\caption{Plot of the graph of $\langle  \psi_1 , e^{-itH_\alpha }  \psi_1 \rangle$, where $\psi_1$ is the eigenvector of $H_\infty$, 
for different values of $\alpha$.
\label {decadimento}}
\end{center}
\end{figure}

\section {Proofs of Theorems 1 and 2}

First of all we prepare the ground for the proofs. \ Then, we prove Theorem 2, at first; the proof of Theorem 1 will follow as a natural 
extension of the proof of Theorem 2.

\subsection {General setting}

Let us denote
\bee
f_\alpha (t) := \langle \psi , e^{-it H_\alpha} \phi \rangle
\eee
where $\psi $ and $\psi$ are two test functions. \ By means of standard arguments we can exchange the order of integration obtaining that
\bee
f_\alpha (t) &:=& \lim_{\epsilon \to 0^+} \langle \psi , e^{-(it +\epsilon) H_\alpha} \phi \rangle \\ 
&=& \lim_{\epsilon \to 0^+} \frac {1}{\pi i} \int_\R \int_\R \overline {\psi (x)} \phi (y) \int_\R k K_\alpha (x,y,k) e^{-k^2 (it+\epsilon )} \, dk \, dy \, dx \\ 
&=& \lim_{\epsilon \to 0^+} \frac {1}{\pi i} \int_\R \left [ \int_\R \int_\R \overline {\psi (x)} \phi (y) K_\alpha (x,y,k)  \, dy \, dx \right ] 
k e^{-k^2 (it+\epsilon )} \, dk
\eee
Hence, it turns out that 
\bee
f_\alpha (t) = f_0(t) + f_r (t), \ f_r (t) := \sum_{j=1}^4 f_j (t) 
\eee
where $f_0 (t)$ is the time evolution term associated to the free Laplacian operator given by 
\be
f_0 (t) =  \lim_{\epsilon \to 0^+} \frac {1}{\pi i} \int_\R G_0(k) 
k e^{-k^2 (it+\epsilon )} \, dk \label {Equa14}
\ee
with kernel 
\be
G_0 (k):= \int_\R \int_\R \overline {\psi (x)} \phi (y) K_0 (x,y,k)  \, dy \, dx \, . \label {Equa15}
\ee
The term $f_r$ is due to the effect of the Dirac's delta barriers:
\bee
f_j (t) &=&    \lim_{\epsilon \to 0^+} \frac {1}{\pi i} \int_\R \left [ \int_\R \int_\R \overline {\psi (x)} \phi (y) K_j (x,y,k)  \, dy \, dx \right ] 
k e^{-k^2 (it+\epsilon )} \, dk  \\ 
&=& \lim_{\epsilon \to 0^+} \frac {1}{\pi i} \int_{\R} G_j (k) k g(k) e^{-(it+\epsilon ) k^2} dk                            
\eee
where
\bee
G_j (k) = \int_{\R} \int_{\R} \overline {\psi (x)} \phi (y) L_j (x,y,k) dx dy 
\eee
are such that 
\bee
G_1 (k) &=& - \alpha (2k+i\alpha) \ell_1 (k) m_1(k) \\ 
G_2 (k) &=& i \alpha^2 e^{2ika} \ell_1 (k) m_2 (k) \\ 
G_3 (k) &=& i \alpha^2 e^{2ika} \ell_2 (k) m_1 (k) \\ 
G_4 (k) &=& - \alpha (2k+i\alpha) \ell_2 (k) m_2(k)
\eee
where
\be
\ell_1 (k) = \int_{\R} e^{ik|x+a|} \overline {\psi (x)} dx \, , \ \ell_2 (k) = \int_{\R} e^{ik|x-a|} 
\overline {\psi (x)} dx \label {Equa16} \\ 
m_1 (k) = \int_{\R} e^{ik|y+a|} \phi (y) dy \, , \ m_2 (k) = \int_{\R} e^{ik|y-a|} \phi (y) dy \label {Equa17}
\ee

In conclusion, we have proved that 
\begin {lemma}
Let $\phi$ and $\psi$ two given well localized wave-function, e.g. $\phi ,\psi \in L^2\cap L^1$, then the term
\bee
f_\alpha (t) := \langle \psi , e^{-iH_\alpha t} \phi \rangle 
\eee
is given by the sum of two terms $f_0 (t)$ and $f_r (t)$ where $f_0 (t)$ is the time evolution term   
given by (\ref {Equa14}), associated to the free Laplacian, and where the other term $f_r (t)$ is due to the effect of the 
two Dirac's delta barriers 
\bee
f_r (t) = \lim_{\epsilon \to 0^+} \frac {1}{\pi i} \int_{\R} G_r (k) k g(k) e^{-(it +\epsilon) k^2} d k 
\eee
where
\be
G_r (k) := \sum_{j=1}^4 G_j (k) = -\alpha (2k+i\alpha ) \left [\ell_1 m_1 + 
\ell_2 m_2 \right ] + i \alpha^2 e^{2ika} \left [\ell_1 m_2 + \ell_2 m_1 \right ]
\label {Equa18}
\ee
and where $g(k)$ is the function defined by (\ref {Equa12}).
\end {lemma}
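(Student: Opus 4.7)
The plan is to combine the functional calculus for $H_\alpha$ with Stone's formula in order to express $f_\alpha(t)$ as a spectral integral of the resolvent, and then substitute the explicit kernel (\ref{Equa11}). First, I would introduce an Abel regularization $f_\alpha(t) = \lim_{\epsilon \to 0^+}\langle \psi, e^{-(it+\epsilon)H_\alpha}\phi\rangle$. Since $H_\alpha$ has purely absolutely continuous spectrum $[0,\infty)$, Stone's formula represents the spectral measure as the jump of the resolvent across the real axis; after the change of variables $E = k^2$ and a folding of the positive-$k$ half-line onto all of $\R$ (using the $k \mapsto -k$ symmetry on the physical sheet), this yields
\bee
\langle \psi, e^{-(it+\epsilon)H_\alpha}\phi\rangle = \frac{1}{\pi i}\int_\R k\, e^{-(it+\epsilon)k^2} \langle \psi, [H_\alpha - k^2]^{-1}\phi\rangle \, dk,
\eee
where the Gaussian regulator $e^{-\epsilon k^2}$ ensures absolute convergence and legitimizes the subsequent manipulations.

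Second, I would insert the explicit resolvent kernel $K_\alpha(x,y;k)$ from (\ref{Equa11}) into $\langle \psi, [H_\alpha - k^2]^{-1}\phi\rangle = \int_\R\int_\R \overline{\psi(x)}\phi(y) K_\alpha(x,y;k)\,dy\,dx$ and invoke Fubini to exchange the spatial double integration with the spectral integration in $k$. The hypothesis $\phi, \psi \in L^1 \cap L^2$ together with the explicit bound on $K_\alpha$ for $\Im k \ge 0$, and with the Gaussian cutoff $\epsilon > 0$ in place, make the joint integrand absolutely integrable, so the interchange is legitimate. Splitting $K_\alpha = K_0 + \sum_{j=1}^4 K_j$ in the integrand produces the decomposition $f_\alpha(t) = f_0(t) + \sum_{j=1}^4 f_j(t)$, with $f_0$ governed by the free kernel $G_0(k)$ of (\ref{Equa15}) and each $f_j$ carrying the common resonance factor $g(k)$ that was already factored out of $K_j = g(k) L_j$.

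Third, and this is essentially bookkeeping, the four correction kernels $L_j(x,y;k)$ all have a \emph{factorized} product structure in $x$ and $y$, namely $c_j(k)\, e^{ik|x\pm a|} e^{ik|y\pm a|}$, so that each spatial double integral separates into a product of one-dimensional integrals of the form $\ell_i(k)$ in (\ref{Equa16}) and $m_j(k)$ in (\ref{Equa17}). Collecting the four contributions together with their prefactors $-\alpha(2k+i\alpha)$ and $i\alpha^2 e^{2ika}$ directly produces the compact formula (\ref{Equa18}) for $G_r(k) = \sum_{j=1}^4 G_j(k)$.

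The main obstacle is justifying the passage $\epsilon \to 0^+$ inside the final $k$-integral: without the Gaussian cutoff the integrand is merely oscillatory and the kernel $K_\alpha(x,y;k)$ does not decay in $k$. One must therefore verify that the envelopes $G_0(k)$ and $G_r(k)$, obtained after integrating against $\phi$ and $\psi$, possess enough decay for the limit to be taken in a suitable improper sense. Integration by parts exploiting the compact support (or $L^1$ regularity) of $\phi, \psi$ yields the required polynomial decay of $\ell_i(k)$ and $m_j(k)$; this is precisely the input that makes the stationary-phase and steepest-descent analysis underlying Theorems 1 and 2 meaningful.
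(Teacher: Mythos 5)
Your proposal is correct and follows essentially the same route as the paper: Abel regularization, the Stone/spectral representation of $e^{-(it+\epsilon)H_\alpha}$ via the resolvent kernel folded onto $k\in\R$, a Fubini exchange, and the splitting $K_\alpha=K_0+\sum_{j=1}^4 K_j$ with the factorized $L_j$ yielding the products $\ell_i m_j$ that assemble into (\ref{Equa18}). The paper compresses all of this into ``standard arguments''; your added attention to the justification of the interchange and of the $\epsilon\to 0^+$ limit is a welcome elaboration, not a deviation.
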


\begin {remark} If $\psi  = \phi$ and it is a real valued function then $\ell_j = m_j$ and 
\be
G_r (k) = -\alpha (2k+i\alpha ) \left [\ell_1^2 + \ell_2^2 \right ] + 2i \alpha^2 e^{2ika} \ell_1 \ell_2 \label {Equa19}
\ee
Furthermore, if
\begin {itemize}

\item [-] $\psi$ is an even function, i.e. $\psi (-x) = \psi (x)$, then $\ell:= \ell_1 = \ell_2$ and 
\be
G_r (k) = 2 \ell^2 (k) \left [ -\alpha (2k+i \alpha ) + i \alpha^2 e^{2ika} \right ] \label {Equa20}
\ee

\item [-] $\psi$ is an odd function, i.e. $\psi (-x) = -\psi (x)$, then $\ell:= \ell_1 = -\ell_2$ and 
\be
G_r (k) &=& 2 \ell^2 (k) \left [ -\alpha (2k+i \alpha ) - i \alpha^2 e^{2ika} \right ] \label {Equa21}
\ee

\end {itemize}

\end {remark}

\subsection {Proof of Theorem 2}

Assume that $\psi $ and $\phi$ coincide with some eigenvector of $H_\infty$. \ In particular, for the sake of definiteness 
we assume that $\phi =\psi =\psi_1$, where $\psi_1$ is the eigenvector (\ref {Equa2}) of $H_\infty$ associated with the ground state. \  
Then a straightforward calculation gives that $\ell (k)$ is given by (\ref {Equa22}) and that 
\bee
G_0 (k) = \frac {4ia}{k} \frac {i a k (4 k^2 a^2 - \pi^2 ) + \pi^2 (1+e^{2ika} )}{(4 k^2 a^2 - \pi^2 )^2}\, . 
\eee
Furthermore, from (\ref {Equa20}) it follows that 
\bee
G_r (k) k g(k) = \frac {\alpha \ell^2 (k)}{(2k+i\alpha)+ i \alpha e^{2ika}}
\eee
is a meromorphic function with simple poles at $k= k_{1,m}$, $m=-1,-2,\ldots $, and thus
\be
f_r (t):=  \frac {\alpha}{\pi i} \int_{\R} d k 
\frac {\ell^2 (k)}{(2k+i\alpha)+ i \alpha e^{2ika}} e^{-i k^2t} \label {Equa23}
\ee

\begin {remark}
In general, in the case $\phi =\psi = \psi_n$, where $n=1,3,5,\ldots $, from (\ref {Equa20}) it follows that the argument of the integral 
in (\ref {Equa23}) is a meromorphic function with simple poles at $k=k_{1,m}$, $m=-1,-2,-3,\ldots $; in the case $\phi =\psi = \psi_n$, 
where $n=2,4,6,\ldots $, from (\ref {Equa21}) it follows that the argument of the integral in (\ref {Equa23}) is
a meromorphic function with simple poles at $k=k_{2,m}$, $m=-1,-2,-3,\ldots $.
\end {remark}

We prove the Theorem in two steps. \ In the first step we compute the asymptotic behavior of $f_0 (t)$, while in the second step 
we compute the asymptotic behavior of $f_r (t)$.

\subsubsection {Asymptotic behavior of $f_0 (t)$ for large $t$}

\begin {lemma} \label {lemma3}
Let $\psi = \phi$ coinciding with the eigenvector $\psi_1$. \ Then 
\bee
f_0 (t) = \frac {8a}{\pi^{5/2} t^{1/2} } e^{-i\pi/4} + \frac {2^{3/2} a^3 (\pi^2 -8)(1+i)}{t^{3/2} \pi^{9/2}} + O(t^{-5/2}) \ 
\mbox { as } \ t \to + \infty \, . 
\eee
\end {lemma}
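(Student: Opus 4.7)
The plan is to treat $f_0(t) = \frac{1}{\pi i} \int_\R F(k)\, e^{-itk^2}\, dk$ (with the $\epsilon\to 0^+$ regularisation understood), where $F(k) := k G_0(k)$, as a Fresnel-type oscillatory integral and extract its asymptotics by Taylor-expanding the amplitude $F$ at the stationary point $k=0$.

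First I would check that $F$ is smooth on the real line: the apparent singularities of $G_0$ at $k=0$ and at $k=\pm\pi/(2a)$ are all removable. The $1/k$ in $G_0$ is killed by the factor $k$ in $F$, and at $k=\pm\pi/(2a)$ the polynomial term $iak(4k^2a^2-\pi^2)$ and the exponential term $\pi^2(1+e^{2iak})$ both vanish to first order and conspire to produce a second-order zero matching the double zero of $(4k^2a^2-\pi^2)^2$, as a short expansion in $u=k\mp\pi/(2a)$ shows.

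Next, a Taylor expansion around $k=0$ of $F$ is obtained by multiplying $iak(4k^2a^2-\pi^2) + \pi^2(1+e^{2iak}) = 2\pi^2 + i\pi^2 a k - 2\pi^2 a^2 k^2 + O(k^3)$ with $(4k^2a^2-\pi^2)^{-2} = \pi^{-4}(1+8a^2 k^2/\pi^2+O(k^4))$ and the overall factor $4ia$, giving
\bee
F(k) = \frac{8ia}{\pi^2} - \frac{4a^2}{\pi^2}\, k - \frac{8ia^3(\pi^2-8)}{\pi^4}\, k^2 + O(k^3);
\eee
the odd-power terms drop out of the final answer by parity. Rescaling $k=\sigma/\sqrt t$ yields
\bee
f_0(t) = \frac{1}{\pi i\sqrt t}\, \int_\R F(\sigma/\sqrt t)\, e^{-i\sigma^2}\, d\sigma,
\eee
and substituting the even part of the Taylor expansion together with the Fresnel integrals $\int_\R e^{-i\sigma^2}\,d\sigma = \sqrt\pi\, e^{-i\pi/4}$ and $\int_\R \sigma^2 e^{-i\sigma^2}\, d\sigma = (\sqrt\pi/2)\, e^{-3i\pi/4}$ (together with $e^{-3i\pi/4}=-(1+i)/\sqrt 2$ and $4/\sqrt 2 = 2^{3/2}$) reproduces exactly the two coefficients stated in the lemma.

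The principal obstacle is controlling the $O(t^{-5/2})$ remainder rigorously, since $F$ decays only like $|k|^{-1}$ at infinity and is not Schwartz. I would handle this by splitting $F = F_{\mathrm{rat}} + F_{\exp}$, where $F_{\exp}$ collects the terms carrying $e^{2iak}$ and $F_{\mathrm{rat}}$ is the remaining rational part; the poles at $\pm\pi/(2a)$ cancel between the two pieces. For $F_{\mathrm{rat}}$, rotating the contour to the ray $k\in e^{-i\pi/4}\R$ converts $e^{-itk^2}$ into a genuine Gaussian $e^{-t r^2}$ to which a Watson-type lemma applies, yielding the full expansion in powers of $t^{-(m+1/2)}$; for $F_{\exp}$, repeated integration by parts in $k$ against $e^{-itk^2} = (-2itk)^{-1} \partial_k e^{-itk^2}$ extracts a factor $t^{-1}$ at each step and closes the estimate.
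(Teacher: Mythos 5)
Your computation of the two coefficients is correct and matches the lemma (incidentally, it confirms that the extra factor $i$ in the $t^{-3/2}$ term of the last display of the paper's own proof is a typo; the statement of the lemma is the right one). The route to the main terms --- Taylor expansion of $F(k)=kG_0(k)$ at the stationary point $k=0$ together with the Fresnel integrals --- is the real-variable shadow of what the paper does: the paper first rotates the whole contour onto the ray $\arg k=-\pi/4$ (after checking that the large arcs contribute nothing and that $e^{2ika}$, which grows only like $e^{\sqrt{2}\,a|k|}$ on that ray, is dominated by the Gaussian $e^{-k^2t}$) and then applies Watson's lemma, which amounts to exactly your Taylor-at-zero computation.

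The gap is in the remainder control, which you correctly identify as the principal obstacle but do not close. First, the splitting $F=F_{\mathrm{rat}}+F_{\exp}$ produces two pieces each of which has a non-integrable \emph{double} pole at $k=\pm\pi/(2a)$ on the real axis, so the two integrals do not separately exist over $\R$; the fact that the poles ``cancel between the pieces'' is precisely why you cannot pull the pieces apart without first indenting the contour (and for a double pole not even a principal-value interpretation is available). Second, and more seriously, the plan for $F_{\exp}$ cannot work as stated: $F_{\exp}(0)=4ia/\pi^2\neq 0$, so $\int_{\R} F_{\exp}(k)e^{-itk^2}dk$ contributes at order $t^{-1/2}$ --- it supplies exactly half of the leading coefficient $8a\pi^{-5/2}e^{-i\pi/4}$ --- while integration by parts against $(-2itk)^{-1}\partial_k e^{-itk^2}$ breaks down at the stationary point $k=0$, where the $1/k$ it introduces is not integrable. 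Integration by parts can only be used away from $k=0$ (say after a smooth cutoff); near $k=0$ the exponential part must be retained in the Fresnel computation, as your first paragraph in fact implicitly does. The cleanest repair is the paper's: keep $F$ whole, rotate the contour for the entire integrand (the singularities at $\pm\pi/(2a)$ are removable and the growth of $e^{2ika}$ is subexponential relative to $e^{-k^2t}$ on the rotated ray), and invoke Watson's lemma once; a standard rigorous stationary-phase argument (cutoff near $0$, Taylor remainder to order $k^4$ with parity, nonstationary phase outside) would also work, but it is not what you wrote.
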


\begin {proof} We have that 
\bee
f_0 (t) := \frac {1}{\pi i} \int_{\R} G_0 (k) k e^{-i k^2t} d k = \frac {4a}{\pi } \int_{\R} h (k) e^{-i k^2t} d k 
\eee
where we set
\bee 
h(k):= \frac {i a k (4 k^2 a^2 - \pi^2 ) + \pi^2 (1+e^{2ika} )}{(4 k^2 a^2 - \pi^2 )^2} 
\eee
and where we observe that at the roots $k =\pm k_1 =\pm \frac {\pi}{2a}$ of the denominator of the integrand function then
\bee
\lim_{k \to \pm k_1} h(k) = \frac {1}{8} \frac {\pi \pm 3 i}{\pi}
\eee
and that the integral converges because for large values of $k$ the integrand function behaves as the integrable (not in absolute value) 
function $\frac {e^{-i k^2t}}{k}$.

Now, let $R>0$ be fixed and let 
\bee
\gamma = \gamma_1 \cup \gamma_2 \cup \gamma_3 \cup \gamma_4
\eee
where $\gamma_1 =[-R,+R]$, and 
\bee
\gamma_2 &=& \{ k \in \C \ :\ k = R e^{i\theta }, \ \theta \in [-\pi/4 ,0] \} \\ 
\gamma_3 &=& \{ k \in \C \ :\ k = - r e^{i\theta }, \ r \theta \in [-R ,R] \} \\ 
\gamma_4 &=& \{ k \in \C \ :\ k = R e^{i\theta }, \ \theta \in [\pi ,3\pi/4] \} 
\eee
with $\gamma_2$ and $\gamma_4$ clock-wise oriented (see Fig. \ref {Fig1}). \ From the Cauchy theorem it follows that 
\bee
\oint_\gamma   h(k) e^{-i k^2t} d k =0 
\eee
and then 
\bee
f_0 (t) := \lim_{R\to + \infty}  \frac {4a}{\pi } \int_{\gamma_1} h(k) e^{-i k^2t} d k 
 = - \sum_{j=2}^4 \lim_{R\to + \infty} \frac {4a}{\pi } \int_{\gamma_j} h(k) e^{-i k^2t} d k 
\eee
\begin{figure} [h]
\begin{center}
\includegraphics[height=6cm,width=10cm]{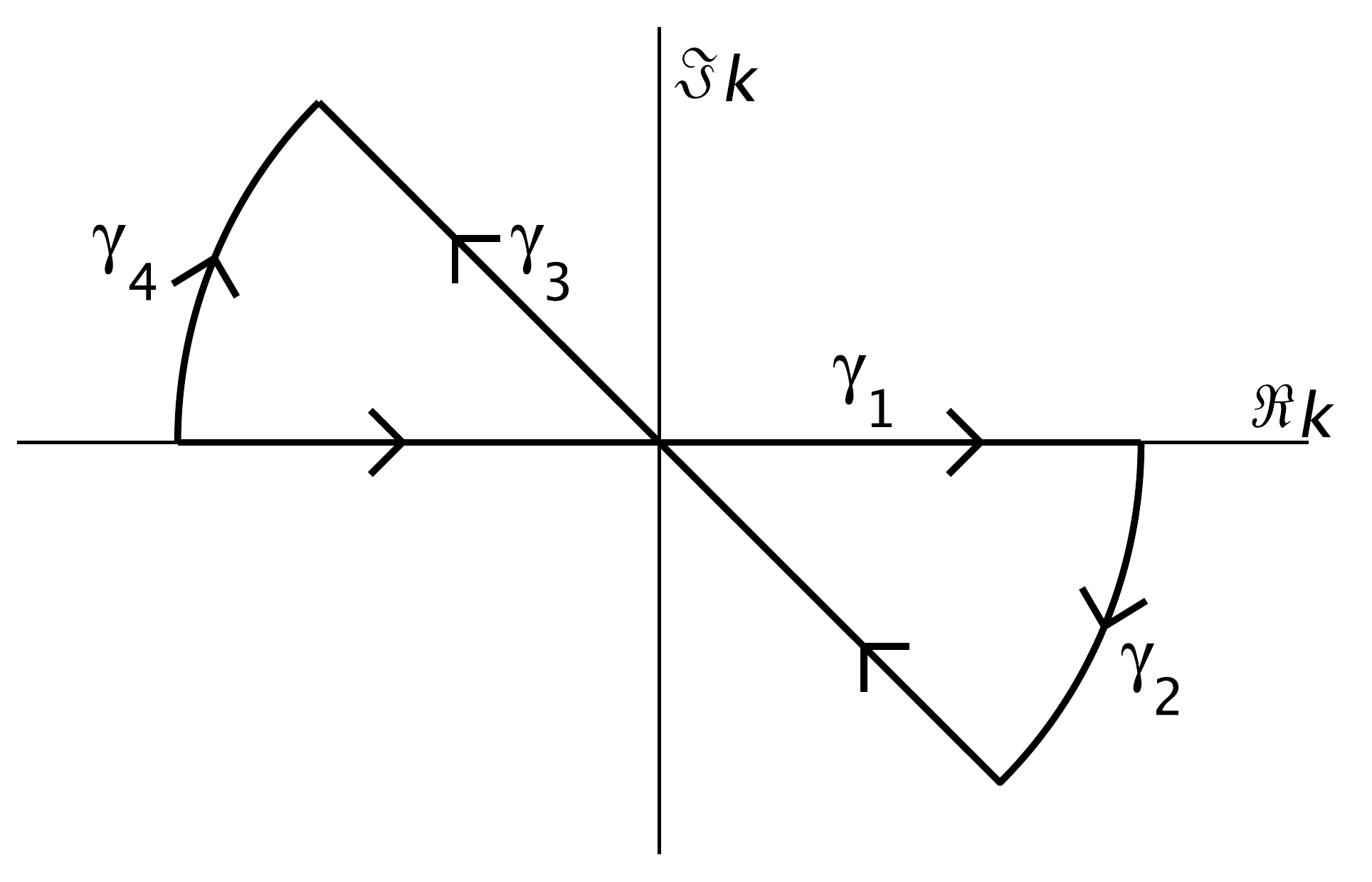}
\caption{\label {Fig1}}
\end{center}
\end{figure}

First of all we prove that 
\begin {lemma} \label {lemma2}
\bee
\lim_{R\to + \infty}   \frac {4a}{\pi } \int_{\gamma_j} h(k) e^{-i k^2t} d k = 0 \ \mbox { for } \ j=2,4 \, . 
\eee
\end {lemma}

\begin {proof}
Indeed, for $j=2$ (for instance), then $k = R e^{i\theta }$, $\theta \in [-\pi/4 ,0]$, and thus (remember that $\gamma_2$ is clock-wise 
oriented)
\bee
\frac {4a}{\pi } \int_{\gamma_2} h(k) e^{-i k^2t} d k = -\frac {4a}{\pi } \int_{-\pi/4}^0 h(Re^{i\theta }) e^{-i R^2 t e^{2i\theta}} R d 
\theta  
\eee
Hence
\bee
\left | \frac {4a}{\pi } \int_{\gamma_2} h(k) e^{-i k^2t} d k \right | 
&\le & C \int _{-\pi/4}^0 \left [ 1+ \frac {e^{-2Ra\sin \theta }}{R^3} \right ] e^{2 R^2 t \sin \theta \cos \theta } d \theta \\
&\le & C \int _{-\pi/4}^0 \left [ 1+ \frac {e^{-2Ra\sin \theta }}{R^3} \right ] e^{\sqrt {2} R^2 t \sin \theta  } d \theta \to 0 \ 
\mbox { as } \ R \to + \infty \, .
\eee
since $\cos \theta \ge \frac {\sqrt {2}}{2}$ for $\theta \in [-\pi /4,0]$.
\end {proof}

Therefore, we can conclude that
\bee
f_0 (t) &=& \frac {4a}{\pi } e^{-i\pi /4} \int_{-\infty}^{+\infty} h(ke^{-i\pi /4}) e^{-i (ke^{-i\pi /4})^2t} d k \\ 
 &=& \frac {4a}{\pi } e^{-i\pi /4} \int_{-\infty}^{+\infty} h(ke^{-i\pi /4}) e^{- k^2 t} d k \\
&=& \frac {8a}{\pi^{5/2} t^{1/2} } e^{-i\pi/4} + \frac {2^{3/2}ia^3 (\pi^2 -8)(1+i)}{t^{3/2} \pi^{9/2}}  + O(t^{-5/2}) \ \mbox { as } \ t 
\to + \infty 
\eee
by Watson's Lemma (see pg. 402 by \cite {SFS}). \ Lemma \ref {lemma3} is thus proved.
\end {proof}

\begin {remark} \label {Nota1} It is well known (see Lemma 7.10 at page 169 by \cite {T}) that
\bee
\left [ e^{-i t H_0 } \phi \right ] (x) &=& \frac {1}{\sqrt {2 i t}} e^{ix^2 /4t} 
\widehat {\left ( e^{iy^2/4t} \phi (y) \right )} \left ( \frac {x}{2t} \right ) \\
&\sim &  \frac {1}{\sqrt {2 i t}} e^{ix^2 /4t} \hat \phi (x/2t) 
\eee
in norm as $t$ goes to infinity. \ Then, when $\psi$ and $\phi$ coincide with the eigenvector $\psi_1$ an explicit calculation gives that
\bee
\hat \psi_1 (\omega ) =  \sqrt {2\pi a}  \frac {1+e^{i 2\omega a } }{e^{i\omega a}(-4\omega^2a^2+\pi^2)} \, . 
\eee 
Hence
\bee
\langle \psi_1 ,  e^{-i t H_0 } \psi_1 \rangle &=& \int_{-a}^{+a} \sqrt {\frac {\pi}{it}} \cos (\pi x /2a) \frac {\left ( 1+e^ {i x a /t} \right ) e^{i x^2 /4t}}{ e^{ixa/2t} \left (\pi^2 -a^2 x^2/t^2 \right )  } dx \\ 
&=& \sqrt {\frac {\pi}{it}} \int_{-a}^{+a}  \cos (\pi x /2a) \frac {\left ( 1+e^ {i x a /t} \right ) e^{i x^2 /4t}}{ e^{ixa/2t} \left (\pi^2 -a^2 x^2/t^2 \right )  }dx \\ 
&\sim &\sqrt {\frac {\pi}{it}} \int_{-a}^{+a}  \cos (\pi x /2a) \frac {2}{ \pi^2 }dx = \frac {8 a\sqrt {-\pi i}  }{t^{1/2}{\pi}^3 } 
\eee
in agreement with Lemma \ref {lemma3}.
\end {remark}

\subsubsection {Asymptotic behavior of $f_r (t)$ for large $t$}

\begin {lemma} 
Let $\psi =\phi$ coinciding with the eigenvector $\psi_1$. \ Then
\bee
&& f_r (t) = - \frac {8a}{\pi^{5/2} \sqrt {t}} e^{-i\pi /4}  - \frac {2^{3/2}(1+i)a^3\left ( \pi^2  - 
8 \right )}{\pi^{9/2} t^{3/2}} - \frac {2^{3/2}(1+i)a}{\pi^{5/2} t^{3/2}\alpha^2} + \\ 
&& \ \ - \sum_{m=1}^\infty \frac {\alpha \beta_m \ell (k_{1,-m})^2}{1+\alpha a \left ( 1 + \frac {2k_{1,-m}}{i \alpha } 
\right )} e^{-i k_{1,-m}^2t} + O(t^{-5/2})
\eee
where $\beta_m$ is defined by (\ref {Equa13Bis}).
\end {lemma}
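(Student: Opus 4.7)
The plan is to deform the real integration contour in~(\ref{Equa23}) onto the steepest-descent ray $e^{-i\pi/4}\R$, exactly as in the proof of Lemma~\ref{lemma3}, while keeping track of the resonance poles that the deformation encounters. Along $e^{-i\pi/4}\R$ the oscillatory factor $e^{-ik^2 t}$ becomes a genuine Gaussian $e^{-s^2 t}$, so the integral reduces to a Watson-lemma computation producing the polynomial $t^{-1/2}$ and $t^{-3/2}$ contributions, while the residues harvested along the way produce the resonant exponential sum.

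Concretely, I would work on the closed contour $\gamma = \gamma_1 \cup \gamma_2 \cup \gamma_3 \cup \gamma_4$ of Lemma~\ref{lemma3}. The integrand $F(k) = \alpha \ell^2(k)[(2k+i\alpha)+i\alpha e^{2ika}]^{-1} e^{-ik^2 t}$ is meromorphic with simple poles exactly at $\{ k_{1,m}\}_{m\in\Z}$, and inside the region bounded by $\gamma$ only those $k_{1,-m}$, $m \ge 1$, with $\arg k_{1,-m} \in [-\pi/4,0]$ appear. The residue theorem then contributes $-2\pi i \sum_{m\ge 1} \beta_m\, \mbox{Res}_{k_{1,-m}} F$ with $\beta_m = 1$ strictly inside the wedge, $\beta_m = 1/2$ on $\gamma_3$ (the borderline case $|\Im k_{1,-m}| = |\Re k_{1,-m}|$), and $\beta_m = 0$ otherwise. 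Each residue is computed by differentiating the denominator and using the resonance identity $i\alpha e^{2iak_{1,-m}} = -(2k_{1,-m}+i\alpha)$ to rewrite the derivative as $2+2a\alpha - 4iak_{1,-m} = 2[1+a\alpha(1+2k_{1,-m}/(i\alpha))]$; with the overall prefactor $(\pi i)^{-1}(-2\pi i) = -2$ from~(\ref{Equa23}), this reproduces exactly the claimed exponential series.

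The arc estimates on $\gamma_2$ and $\gamma_4$ follow the template of Lemma~\ref{lemma2}, but need two cases because $|e^{2ika}|$ is unbounded in the fourth quadrant: for $\theta$ close to $0$ the denominator is $O(k)$ and $\ell^2(k) = O(k^{-4})$, so $|F| = O(R^{-5})$; for $\theta$ bounded away from $0$ the factor $e^{2ika}$ appears with the same magnitude in the dominant part of $\ell^2$ and in the denominator and cancels, while the Gaussian $|e^{-ik^2 t}| = e^{R^2 t \sin 2\theta}$ with $\sin 2\theta \le 0$ provides the remaining suppression. On $\gamma_3$ the change of variable $k = se^{-i\pi/4}$ turns the surviving piece into $\frac{e^{-i\pi/4}}{\pi i}\int_\R H(s)\, e^{-s^2 t}\, ds$ with $H(s) := \alpha \ell^2(se^{-i\pi/4})/[(2se^{-i\pi/4}+i\alpha)+i\alpha e^{2iase^{-i\pi/4}}]$, and only the even Taylor coefficients of $H$ at $s=0$ survive. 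A direct expansion using $\ell(u) = \frac{4\sqrt a}{\pi}[1+iau - \frac{(\pi^2-4)a^2}{\pi^2} u^2 + O(u^3)]$ and $(2u+i\alpha)+i\alpha e^{2iau} = 2i\alpha[1 + \frac{(1-a\alpha)u}{i\alpha} - a^2 u^2 + O(u^3)]$, together with the algebraic identity $\frac{2a(1-a\alpha)}{\alpha} + \frac{(1-a\alpha)^2}{\alpha^2} = \frac{1}{\alpha^2}-a^2$, gives $H(0) = -8ai/\pi^2$ and a coefficient of $s^2$ that splits cleanly into a $\frac{(\pi^2-8)a^2}{\pi^2}$ piece and a $\frac{1}{\alpha^2}$ piece. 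Multiplying by $e^{-i\pi/4}/(\pi i)$ and using $ie^{-i\pi/4} = (1+i)/\sqrt{2}$, Watson's lemma then produces $-\frac{8a}{\pi^{5/2}\sqrt t}e^{-i\pi/4}$ (which cancels the leading $f_0$ term from Lemma~\ref{lemma3}) plus precisely the two $t^{-3/2}$ contributions in the statement, the $\alpha^{-2}$ term becoming the coefficient $c_\alpha$ of Theorem~2.

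The main obstacle will be the uniform arc estimate in the fourth quadrant: the explicit cancellation between the $e^{2ika}$ growth in the numerator of $\ell^2$ and in the denominator has to be used carefully, since each of these is exponentially large on its own. A secondary but purely mechanical difficulty is that the clean separation of the $s^2$ coefficient of $H$ into an $a^2$ piece and an $\alpha^{-2}$ piece emerges only after the algebraic identity above, so the Taylor expansions must be carried through with some care.
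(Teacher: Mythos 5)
Your proposal follows essentially the same route as the paper: the same closed contour $\gamma$ rotated to the ray $e^{-i\pi/4}\R$, vanishing arc estimates in the spirit of Lemma \ref{lemma2}, Watson's lemma on $\gamma_3$ for the $t^{-1/2}$ and $t^{-3/2}$ terms, and the residue sum over the poles $k_{1,-m}$ with the same $\beta_m$ convention and the same simplification $D'(k_{1,-m})=2\bigl[1+a\alpha\bigl(1+\frac{2k_{1,-m}}{i\alpha}\bigr)\bigr]$ via the resonance identity. The details you supply (the Taylor coefficients entering Watson's lemma and the more careful treatment of $|e^{2ika}|$ on the arcs) are correct and consistent with what the paper leaves implicit.
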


\begin {proof}
We recall (\ref {Equa23}), where the function $\ell (k)$ has been defined by (\ref {Equa22}) and where we observe that at the roots 
$k =\pm k_1 = \pm \frac {\pi}{2a}$ of the denominator of $\ell (k)$ then
\bee
\lim_{k \to \pm k_1} \ell (k) = \frac {1}{8} \frac {\pi \pm 3 i}{\pi} \, . 
\eee
Furthermore, the integral converges because for large values of $k$ the integrand function behaves as the integrable (in absolute value) 
function $\frac {1}{k^5}$. \ The integrand function has poles coinciding with the roots of $(2k+i\alpha ) + i \alpha e^{2ika}$, that is at
\bee
k_{1,j} = \frac {i}{2a} \left [ W_j(-a\alpha e^{a\alpha } ) - a \alpha \right ] \, , \ j \in \Z \, . 
\eee

Now, let $R>0$ be fixed and let (see Fig.  \ref {Fig2}) 
\bee
\gamma = \gamma_1 \cup \gamma_2 \cup \gamma_3 \cup \gamma_4
\eee
\begin{figure} [h]
\begin{center}
\includegraphics[height=6cm,width=10cm]{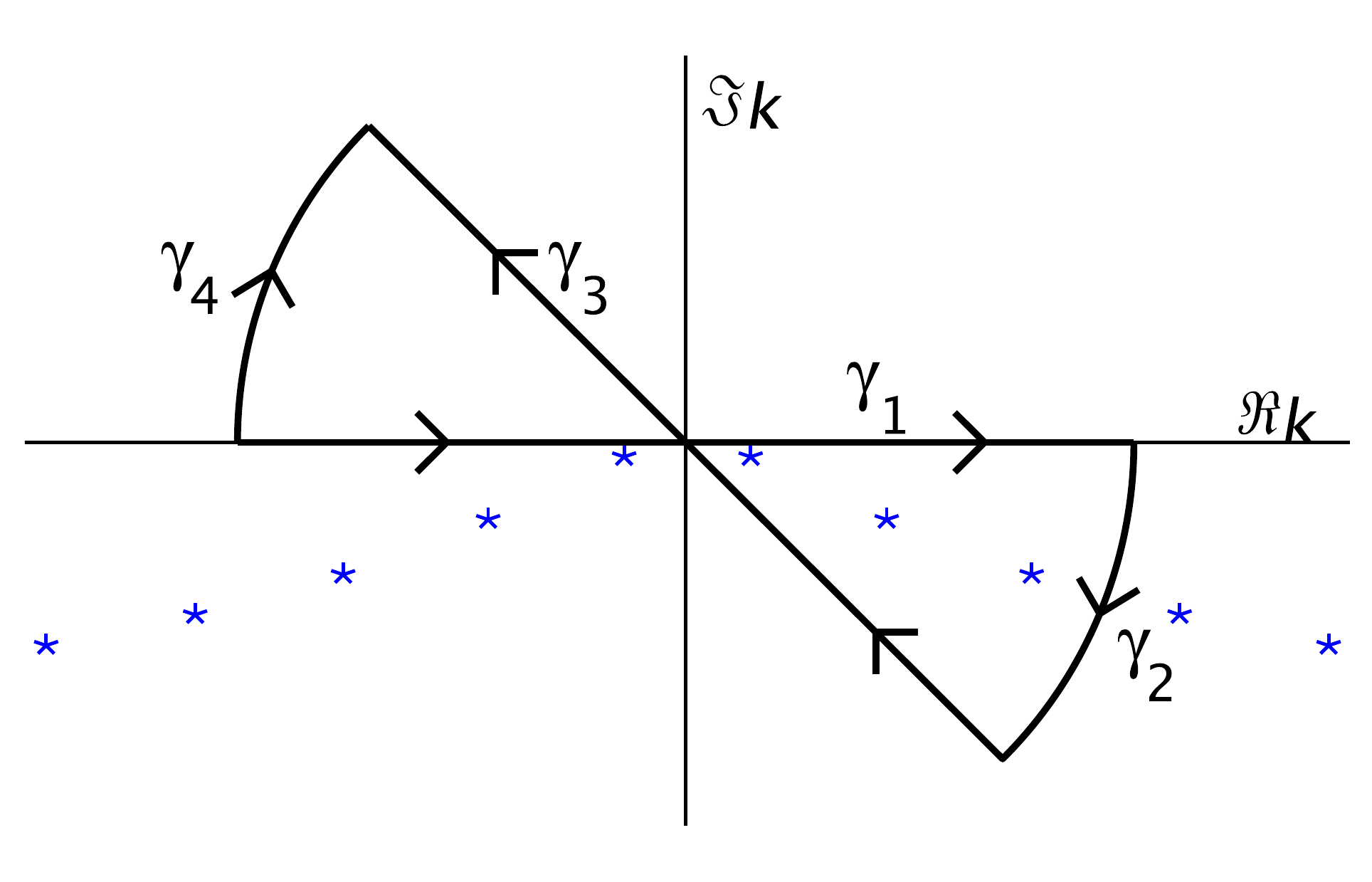}
\caption{Asterisks denote the poles $k_{1,j}$, $j\in \Z$, of the integrand function in (\ref {Equa23}). \label {Fig2}}
\end{center}
\end{figure}
as in the proof of Lemma \ref {lemma3}, where $R>0$ is such that $k_{1,j} \notin \gamma$. \ Since the integrals along $\gamma_2$ and $\gamma_4$ go to zero as $R$ 
goes to infinity, as in Lemma \ref {lemma2}, and from the Cauchy theorem it follows that 
\bee
f_r (t) = I + II 
\eee
where 
\bee
I &=&  -\frac {\alpha}{\pi i} \int_{\gamma_3} \frac {[\ell (k)]^2}{(2k+i\alpha ) + i \alpha e^{2ika}} 
e^{- ik^2t} d k \\ 
&=& \frac {\alpha}{\pi i} \int_{\R} \frac {\ell (ke^{-i\pi/4})^2}{(2ke^{-i\pi/4}+i\alpha ) + i 
\alpha e^{2ike^{-i\pi/4}a}} e^{- k^2t} e^{-i\pi/4} d k \\ 
%&=&  - \frac {8a}{\pi^{5/2} \sqrt {t}} e^{-i\pi /4} - \frac {2^{3/2}(1+i)a\left ( a^2 \pi^2 \alpha^2 - 
%8 a^2 \alpha^2 +\pi^2 \right )}{\pi^{9/2} t^{3/2}\alpha^2} + O(t^{-5/2}) \\ 
&=&  - \frac {8a}{\pi^{5/2} \sqrt {t}} e^{-i\pi /4}  - \frac {2^{3/2}(1+i)a^3\left ( \pi^2  - 
8 \right )}{\pi^{9/2} t^{3/2}} - \frac {2^{3/2}(1+i)a}{\pi^{5/2} t^{3/2}\alpha^2}+ O(t^{-5/2})
\eee
by means of the Watson's Lemma, and where the term II picks up the contribution due to the residue at the poles $k_{1,-m}$, 
$m=1,2,3,\ldots $, such that $|\Im k_{1,-m}| \le |\Re k_{1,-m}| $:
\bee
II &=& -2{\alpha}  \sum_{m =1}^\infty \beta_m \mbox {Res} \left [ \frac {\ell (k)^2}{(2k+i\alpha ) + i \alpha e^{2ika}} e^{-i k^2t} \right ]_{k=k_{1,-m}} \\ 
&=& -2\alpha \sum_{m=1}^\infty \frac {\beta_m \ell (k_{1,-m})^2}{2+2\alpha a \left ( 1 + \frac {2k_{1,-m}}{i \alpha } 
\right )} e^{-i k_{1,-m}^2t} 
\eee
where $\beta_m =1$ if the pole $k_{1,-m}$ is inside the complex set enclosed by $\gamma$, and where $\beta_m =\frac 12$ 
if the pole $k_{1,-m}$ belongs to the border $\gamma$, otherwise $\beta_m=0$. \ Lemma 5 is so proved.
\end {proof}

\subsection {Proof of Theorem 1}

In fact, the proof of Theorem 1 is nothing but the extension of the proof of Theorem 2 to the general case. \ Indeed, the function $f_\alpha$ may be written as
\bee
f_\alpha (t) = \frac {1}{\pi i} \int_{\R} F_\alpha (k) e^{-i k^2 t } dk 
\eee
where 
\bee
F_\alpha (k) &=& k \left [ G_0 (k) + g(k) G_r (k) \right ] 
\eee
and where $G_0 (k)$, $G_r (k)$ and $g(k)$ are respectively defined by (\ref {Equa15}), (\ref {Equa18}) and (\ref {Equa12}). \ Hence, by means of the Cauchy theorem 
and by making use of the same arguments given in the proofs of Lemma 4 and 5 it follows that 
\bee
f_\alpha (t) = \frac {1}{\pi i} \left [ \int_{\R} F_\alpha (k e^{-i \pi /4}) e^{- k^2 t } e^{-i \pi /4} dk - \frac {1}{2\pi i} \sum_{n=1}^\infty \beta_n  \mbox {Res} \ 
\left [ k  g(k) G_r (k) \right ]_{k=\sqrt {{\mathcal E}_{\alpha ,n}} } e^{-i {\mathcal E}_{\alpha ,n} t } \right ] 
\eee
where $\beta_n =1$ if the pole $k = \sqrt {{\mathcal E}_{\alpha ,n}}$ is inside the complex set enclosed by $\gamma$, and where $\beta_n =\frac 12$ 
if the pole $k = \sqrt {{\mathcal E}_{\alpha ,n}}$ belongs to the border $\gamma$, otherwise $\beta_n=0$. \ We should remark that we can apply the arguments by Lemma 4 and 5 provided that 
the functions $G_0 (k)$, $\ell_j (k)$ and $m_j (k)$, $j=1,2$, admits analytic extension in the sectors $\{ z \in \C \ :\ \mbox {arg} (z) \in [-\pi /4 ,0] \}$ and that such 
extensions may be suitably controlled by $e^{-i k^2 t}$ in the same domain. \ These conditions are both fulfilled provided that the test functions $\phi $ and $\psi$ are well 
localized. \ Furthermore, in the general case we have both families of complex poles (\ref {Equa8}) and (\ref {Equa9}). \ Finally, Watson's Lemma \cite {SFS} gives that 
\bee
&& \int_{\R} F_\alpha (k e^{-i \pi /4}) e^{- k^2 t } e^{-i \pi /4} dk = e^{-i \pi /4} \int_{0}^{+\infty } 
\left [ F_\alpha (k e^{-i \pi /4}) + F_\alpha (-k e^{-i \pi /4}) \right ]  e^{- k^2 t }  dk \\ 
&& \ \ = \frac 12 e^{-i \pi /4} \sum_{j=0}^N \frac {t^{-(j+1)/2}}{j!} \Gamma \left ( \frac {j+1}{2} \right ) 
\left . \frac {d^j \left [ F_\alpha (k e^{-i \pi /4}) + F_\alpha (-k e^{-i \pi /4}) \right ]}{d k^j} \right |_{k=0} + O \left ( t^{-(N+2)/2} \right ) \\ 
&& \ \ = e^{-i \pi /4} \sum_{s =0}^N \frac {t^{-(2s+1)/2}}{2s!} \Gamma \left ( \frac {2s+1}{2} \right ) 
\left . \frac {d^{2s}  F_\alpha (k e^{-i \pi /4}) }{d k^{2s}} \right |_{k=0} + O \left ( t^{-(N+2)/2} \right ) 
\eee
where a straightforward calculation gives that
\bee
F_\alpha (0) = 0 
\eee
for any $\alpha >0$. \ Theorem 1 is thus proved.
\begin {remark}
In fact, we could assume a weaker condition about the test functions $\phi$ and $\psi$; that is would be enough that 
\bee
|\phi (x) |, \, |\psi (x) | \le a e^{-bx^2} \, ,\ x \in \R 
\eee
for some $a,b >0$. 
\end {remark}

\appendix

\section {The Lambert function}

Here, we collect from \cite {K} some basic properties of the Lambert function. \ The Lambert function, denoted by $W(z)$,
 is defined to be the function satisfying the equation 
 \bee
 W(z)e^{W(z)} = z \, ,\ z \in \C \, .
 \eee%
 This function is a multivalued analytic function. \ The principal branch, denoted by $W_0 (z)$, is analytic at $z=0$ 
 and its power series expansion is given by
 \bee
 W_0 (z) = \sum_{n=1}^\infty \frac {(-n)^{n-1}}{n!}z^n \, . 
 \eee
 In particular $W_0(0)=0$. \ This series has radius of convergence $1/e$.
 
 In fact, if we denote $W_m (z)$, $m\in \Z$, all the branches of the Lambert function then we have the following picture:
 
 \begin {itemize}
  
  \item [-] $W_0 (z)$ is the only branch containing the interval $(-1/e , +\infty)$. \ It has a second-order 
  branch point at $z=-1/e$ and the branch cut is $(-\infty ,-1/e]$.
  
  \item [-] $W_1(z)$ and $W_{-1}(z)$ share the branch point at $z=-1/e$ with $W_0(z)$ and furthermore they 
  also have a branch point at $z=0$. \ Then, each of them has a double branch cut: $(-\infty -1/e ]$ and $(-\infty ,0]$. \ 
  By means of a suitable choice of the functions on the top of the branch cuts thus $W_0(z)$ and $W_{-1}(z)$ are 
  the only branches of the Lambert function that take real values.
  
  \item [-] All the other branches $W_m (z)$, $m\in \Z \setminus \{ 0,\pm 1\} $, have only the branch cut 
  $(-\infty ,0]$, similarly to the branches of the logarithm.
 
 \end {itemize}

 Among the properties of the Lambert function we recall the following ones:
 
 \begin {itemize}
  
 \item [-] {Symmetry conjugate property}: $W_m (\bar z) =\overline {W_{-m} (z)}$.
  
 \item [-] {Asymptotic expansion for large argument}: for large $z$ we have that 
 \be
 W_m (z) = \ln (z) + 2\pi i m - \ln \left [ \ln (z) + 2 \pi i m \right ] + \sum_{k=0}^{+\infty} 
 \sum_{j=1}^{+\infty} c_{kj} \frac {\ln^j \left [ \ln (z) + 2\pi i m \right ]}
 {\left [ \ln (z) + 2 \pi i m \right ]^{j+k}} \label {Equa24}
 \ee
 where
 \bee
 c_{kj} = \frac {(-1)^k}{j!} 
 \left [ 
 \begin {array}{c}
  k+j \\ k+1 
 \end {array}
 \right ]
 \eee
 where $
 \left [ 
 \begin {array}{c}
  k+j \\ k+1 
 \end {array}
 \right ]
 $ is a Stirling cycle number of first kind.
 
 \end {itemize}

\end {document}